\newcommand{\argmax}{\operatornamewithlimits{argmax}}
\newcommand{\argu}[2]{\renewcommand{\arraystretch}{0.5} \begin{array}{c} \\
{\displaystyle #1}\\{\scriptscriptstyle #2}\end{array}}
\newcommand{\argmin}{\arg\hspace{-0.1cm}\min}
\newcommand{\II}{{\bf 1}}
\newtheorem{example}{Example}
\newtheorem{definition}{Definition}
\newtheorem{theorem}{Theorem}
\newtheorem{corollary}{Corollary}
\begin{document}
\title{Outlier Detection in Contingency Tables based on Minimal Patterns}
\author{Sonja Kuhnt\footnote{Faculty of Statistics, TU Dortmund University, Germany} \and Fabio Rapallo\footnote{Department DISIT, Universit\`a del Piemonte Orientale, Italy} \and Andr$\acute{\text{e}}$ Rehage$^*$ } 

\date{}

\maketitle

\begin{abstract}
A new technique for the detection of outliers in contingency
tables is introduced. Outliers thereby are unexpected cell counts
with respect to classical loglinear Poisson models. Subsets of
cell counts called minimal patterns are defined, corresponding to
non-singular design matrices and leading to potentially
uncontaminated maximum-likelihood estimates of the model
parameters and thereby the expected cell counts. A criterion to
easily produce minimal patterns in the two-way case under
independence is derived, based on the analysis of the positions of
the chosen cells. A simulation study and a couple of real-data
examples are presented to illustrate the performances of the newly
developed outlier identification algorithm, and to compare it with
other existing methods.

{\bf Keywords:} Contingency tables; Robustness; Loglinear models;
Outliers; Minimal patterns.

{\bf AMS Subject Classification:} 62H17; 62F35.

\end{abstract}

\section{Introduction}

In every statistical analysis, observations can occur  which
``appear to be inconsistent with the remainder of that set of
data'' \citep{barnett1994}. The same authors also name outliers in
contingency tables among little-explored areas, which is up to day
still true. For two-way tables outliers have been treated in a
couple of research papers in connection with the multinomial
model, e.g., by employing residuals and by defining suitable tests
based on them in their detection
\citep{simonoff1988,fuchs1980,gupta2007}. Approaches for the
detection of outliers in higher-dimen\-sional tables with respect
to the Poisson model are also found in the literature (e.g.
\citealp{upton1995,kuhnt2004}).

In the context of contingency tables we deal with outlying cells
rather than individual outlying observations contributing to the
cell counts. Therefore, the detection of outliers in contingency
tables is based on a sample of size one for each cell count, and
this fact implies that any detection procedure must be defined
with the greatest caution. Additionally, for more than one
outlying cell, their position in the table can be crucial with
respect to their identification as well as their effect on data
analysis methods. This fact has been recognized in the discussion
of outlier detection methods and breakdown concepts for
contingency tables by \citet{kuhnt2000,kuhnt2010}. Also
\cite{rapallo2012} introduces a notion of patterns of outliers in
connection with goodness-of-fit tests by applying techniques from
algebraic statistics.

In this paper we follow a new approach towards outlier
identification in contingency tables. Going back to the general
notion of outliers as observations (or, more precisely: cells)
deviating from a structure supported by the majority of the data
we define  so-called minimal patterns. These sets cover more than
half of the cells while at the same time containing just enough
cells to ensure full rank of the subdesign matrix of a loglinear
model. For each pattern the remaining cell counts are candidate
outliers. Although the independence model has a major role in the
present paper, our technique based on minimal patterns can be
applied to any loglinear model. Moreover, finding these minimal
patterns is not an easy task, and for the independence model in
two-way tables we derive theoretical results on the nature of
minimal patterns. Thus, the independence model is used as a
leading example. Nevertheless, we discuss also an example on a
three-way table under conditional independence in order to show
the practical applicability of the notion of minimal pattern under
a general loglinear model.

To actually identify the outliers, we suggest two possible
algorithms by running through all minimal patterns and using the
notion of $\alpha$-outliers.

The paper is organized as follows. Section
\ref{sec:BasicDefinitions} briefly recalls $\alpha$-outliers with
respect to  loglinear Poisson models and one-step outlier
identification methods based on ML- or $L_1$-estimators. In
Section \ref{sec:OutlierIdentifiers} we define  (strictly) minimal
patterns and present two outlier detection  methods with minimal
patterns, called OMP and OMPC, the latter identifying cell counts
with the highest count of being an outlier with respect to a
minimal pattern. Some results on the connection between minimal
patterns and cycles in subtables are derived in Section
\ref{sec:MinPatIndependenceModel} for the independence model in
two-way tables. The performances of the different outlier
identification methods are compared by a simulation study in
Section \ref{sec:SimulationStudy} and applications to several data
sets from the literature are discussed in Section
\ref{sec:casestudies}. Finally, in Section \ref{sec:conclusion}
some conclusions and comments are made.

\section{Loglinear Poisson models, estimators and $\alpha$-outliers}
\label{sec:BasicDefinitions}

We consider contingency tables with $N$ cell counts, assumed to be
realizations of random variables $Y_j$, $j=1,...,N$, from a
loglinear Poisson model. Any of these models may be presented as
generalized linear models \citep{agresti2002} with structural
component
\[
 E(Y_j) =  \exp(x_j' \beta)=: m_j  , \ j = 1,...,N
\]
where $x_j$ is the $j^{th}$ column of the full rank design matrix
$X \in \mathbb R^{p \times N}$ of the model  and $\beta\in \mathbb
R^p$ the unknown parameter vector. The maximum likelihood
(ML-)estimator of $\beta$ is given by
\begin{eqnarray}
\label{mlestim} \widehat{\beta}^{ML}
=\argu{\argmax}{\beta\in\mathbb R^p}\left(\sum_{j=1}^N\left( Y_j\,
x_j'\beta-\exp({x_j'\beta})\right)\right).
\end{eqnarray}
A more robust alternative  is  the $L_1$-estimator
\citep{hubert1997}
\begin{eqnarray}
\label{l1estim} \widehat{\beta}^{L_1}
=\argu{\argmin}{\beta\in\mathbb R^p} \sum_{j=1}^N|\log Y_j
-x_j'\beta|.
\end{eqnarray}

Generally, the notion of outliers as surprising observations far
away from the bulk of the data has been formalized by so-called
$\alpha$-outlier regions \citep{davies1993}. Thereby observations
which are located in a region of the sample space with a very
small probability of occurrence with respect to a given model are
defined as outliers. A formal definition of outliers in
contingency tables is given in \cite{kuhnt2004}:

\begin{definition}
An observed cell count $y_j$ is called an $\alpha$-outlier with
respect to a loglinear Poisson model if it lies in the outlier
region
\[
out(\alpha, Poi(m_{j})) =\{y\in\mathbb N: poi(y,m_j)< K(\alpha)\},
\]
where $poi(\cdot,m_j)$ denotes the probability density function of
a Poisson random variable, $\alpha \in (0,1)$, and
$K(\alpha)=\sup\{K>0:\sum_{y\in \mathbb N} poi(y,m_j)
\II_{[0,K]}(poi(y,m_j))\leq \alpha\},$ \mbox{where } $\II_A(x)$ is
the indicator function.
\end{definition}

The complement of an $\alpha$-outlier region is called the
$\alpha$-inlier region. All cell counts within an inlier region
are named inliers, i.e. inliers are just those observations which
are not outliers. Using these notions, one-step outlier
identifiers are easily derived,  defined next based on the
$L_1$-estimator. However,  the estimator type is exchangeable
where robust estimators are of course to be preferred.

\begin{definition} \label{Def:OutlierIdentifier}
Let $\alpha \in (0,1)$ be given. A one-step outlier identifier
based on the $L_1$-estimator is defined by the following
procedure:
\begin{enumerate}
\item[(i)] Estimate $\hat{m}_j, j=1,...,N$, for the loglinear
Poisson model based on the complete contingency table by the
$L_1$-estimator.

\item[(ii)] Identify cell counts $y_j$ in $\alpha$-outlier regions
with respect to $Poi(\hat{m}_j)$ as outliers.
\end{enumerate}
\end{definition}

The choice of $\alpha$ for the one-step outlier identifiers in
relation to the size $N$ of the table is discussed in
\cite{kuhnt2004}. This identifier is compared in Section
\ref{sec:SimulationStudy} to the new methods developed next.

\section{Detecting outliers based on minimal patterns}
\label{sec:OutlierIdentifiers}

Consider the notion of outliers as observations which are
deviating from a model structure supported by the majority of the
data. Here this model is assumed to be a loglinear model
characterized by its design matrix $X$. We look at patterns in the
table, given as subsets of the cells,  which cover at least half
of the table but not more observations than necessary to ensure a
full rank design matrix. These patterns are seen as potential core
sets of  the majority of the data from which individual
observations deviate.

\begin{definition} \label{Def:MinimalPattern}
Let $X$ be the design matrix of a log-linear model with parameter
space ${\mathbb R}^p$. A subset of  cells is called a {\em minimal
pattern} if
\begin{itemize}
\item[(i)] the subset has at least $\left\lfloor \frac{N}{2}
\right\rfloor + 1$ elements;

\item[(ii)] the  corresponding submatrix of $X$ is of full rank;

\item[(iii)] the subset has the minimal number of elements
necessary to fulfill condition (i) and condition (ii).
\end{itemize}
\end{definition}

Restricting the considered subset of the cells to those necessary
to uniquely define model parameters leads to the definition of
strictly minimal patterns.

\begin{definition}
Let $X$ be the design matrix of a log-linear model with parameter
space ${\mathbb R}^p$. A subset of $p$ cells is called a {\em
strictly minimal pattern} if the corresponding submatrix of $X$ is
of full rank.
\end{definition}

 If  $p=\left\lfloor \frac{N}{2} \right\rfloor + 1$ holds, then strictly minimal and minimal patterns coincide.
In case of $p<\left\lfloor \frac{N}{2} \right\rfloor + 1$, adding
$\left\lfloor \frac{N}{2} \right\rfloor + 1-p$ arbitrarily chosen
cells to a strictly minimal pattern returns a minimal pattern.
Note that not all subsets with $p$ cells yield non-singular
matrices.

Strictly minimal patterns are different from strictly
reconstructable replacement patterns  \citep{kuhnt2010}. The
latter define outlier patterns which are unambiguously
identifiable and are used to describe the breakdown behavior of
estimators and identification rules. They are closely related and
inspired by the notion of unconditionally identifiable interaction
patterns in the situation of two-way classification models as
introduced by  \citep{terbeck98}.

Before developing algorithms for the detection of outliers based
on minimal patterns we fix some notation. Let $\mathcal W$ be the
set of all $W$ minimal patterns and $X$ the full design matrix in
the loglinear Poisson model. Each column of $X$ corresponds to a
cell in the contingency table. Taking only the columns of $X$
which correspond to the cells of each minimal pattern yields $X_w,
w=1,...,W$, and we denote with ${\mathcal F}_w$ the set of column
indices in the $w$-th minimal pattern.

A first algorithm on the detection of outliers with minimal
patterns called OMP is defined in Algorithm \ref{alg1} and
identifies the set with the minimal number of elements as
identified outliers. The idea is that we then have the largest
majority of observations well explained by the model given by the
minimal pattern.

\begin{algorithm}
\caption{Outlier detection with minimal patterns (OMP)}
\label{alg1}
\begin{algorithmic}
\FOR{$w=1$ to $W$} \STATE $\widehat{\beta}_w^{ML} \leftarrow
\argu{\argmax}{\beta\in\mathbb R^p}\left(\sum_{1 \leq j \leq N
\cap j \in \mathcal F_w} \left(
Y_j\, x_j'\beta-\exp({x_j'\beta})\right)\right)$ 
\FOR{$j=1$ to $N$} \STATE Determine $out(\alpha,
Poi(\hat{m}^w_j))$ for $m^w_j$ based on
$\exp(x_j'\widehat{\beta}_w^{ML})$
\ENDFOR \STATE $NUMB.OUT_w \leftarrow$ Number of outliers for
minimal pattern $w$ \ENDFOR \FOR{$w=1$ to $W$} \IF{$NUMB.OUT_w =
min(NUMB.OUT)$} \STATE Outlier pattern $\leftarrow $ Cells with
outliers identified with minimal pattern $w$ \ENDIF \ENDFOR
\end{algorithmic}
\end{algorithm}

The  idea behind the new outlier detection methods is to run
through all minimal patterns and consider each of them as
outlier-free subset of the table. The maximum likelihood estimate
from these cells provides estimated mean values for all cells. We
check for all cell counts  outside the pattern if they lie in the
$\alpha$-outlier region with respect to the Poisson distribution
given by the estimate. Those cells for which this is true make the
set of outliers with respect to the  minimal pattern. Hence, we
get a set with outliers for each minimal pattern.

Notice that in  Algorithm \ref{alg1} the minimum number of
outliers may be attained for more than one minimal pattern. Then
more than one solution exist and different possible outlier
patterns are identified, which can be discussed based on knowledge
of the subject.

A slightly different alternative to OMP is implemented in
Algorithm \ref{alg2bis}, called outlier detection with minimal
patterns and the count method (OMPC). Here we count how often each
cell is identified as outlier with respect to a minimal pattern.
If the cell is identified  in more than half of the cases  it is
identified as outlier. We denote the number of minimal patterns
not including the cell by $r$. The choice of the value $r/2$ as a
cut-off in order to discriminate between outliers and inliers is
briefly discussed in Section \ref{sec:SimulationStudy}.

\begin{algorithm}
\caption{Outlier detection with minimal patterns and the count
method (OMPC)} \label{alg2bis}
\begin{algorithmic}
\FOR{$w=1$ to $W$} \STATE $\widehat{\beta}_w^{ML} \leftarrow
\argu{\argmax}{\beta\in\mathbb R^p}\left(\sum_{1 \leq j \leq N
\cap j \in \mathcal F_w} \left(
Y_j\, x_j'\beta-\exp({x_j'\beta})\right)\right)$ 

\FOR{$j=1$ to $N$, $j \notin \mathcal F_w$}
    \STATE Determine $out(\alpha, Poi(\hat{m}^w_j))$ for $m^w_j$ based on          $\exp(x_j'\widehat{\beta}_w^{ML})$
\STATE $r_j \leftarrow $ absolute frequency of cell $j$ not
contained in a minimal pattern \IF{$\# (y_j \in out(\alpha,
Poi(\hat{m}^w_j)), j \notin \mathcal F_w) > r_j/2$ } \STATE $y_j$
is an outlier \ENDIF \ENDFOR  \ENDFOR
\end{algorithmic}
\end{algorithm}

When $W$ becomes large and the enumeration of all minimal patterns
is not feasible, it is possible to introduce a standard Monte
Carlo approximation in the algorithms.

As we take the minimal patterns to be potential outlier-free
subsets it seems straightforward to employ the ML-estimator.
However, the general procedure is open to other choices. Within
the simulation study in Section \ref{sec:SimulationStudy} we also
use the $L_1$-estimator and call the procedure OMPCL1.


\cite{shane2001} also use elemental subsets of the data to derive
robust estimators for categorical data. To detect outliers, we
adapt the Pearson least trimmed chi-squared residuals (LTCS)
estimator developed by them. In short, Shane and Simonoff create a
certain number of elemental subsets and estimate $\beta_{LTCS}$
for each elemental subset. The subset that minimizes the criterion
$\sum_{j=1}^N c_j X^2_{(j)} (y_j, \hat{e}_j)$ will be chosen to
estimate $\beta_{LTCS}$, where \[ c_j = \begin{cases} 1, &
\text{if } j \leq h \\ 0, & \text{if } j > h, \end{cases} \] and
$X^2_{(j)}$ is the ordered Pearson chi-squared statistic. The
authors derive breakdown points for this estimator, which are
based on the tuning parameter $h$. The optimal breakdown point of
$\beta_{LTCS}$ is yielded by \[ h=h_{op} \in [ \left\lfloor (N + G
+ 1)/2 \right\rfloor,  \left\lfloor (N + G + 2)/2 \right\rfloor ],
\] where $G$ is the maximum number of linearly independent rows in
the design matrix $X$. The generation of elemental subsets can be
conducted through minimal patterns, since the authors choose
subsets with $p$ elements and their design matrices having full
rank. They also state that the number of elemental subsets does
not have to be very large for their method. In their simulation
study, results based on $500$ elemental subsets were virtually the
same than for $1,000$ or $1,500$, hence we choose $W=1,000$. The
algorithm is summarized in the following pseudo-code:

\begin{algorithm}
\caption{Outlier detection with minimal patterns and the LTCS
estimator (OLTCS)} \label{alg3ltcs}
\begin{algorithmic}
\STATE $h \leftarrow \left\lfloor (N + G + 2)/2 \right\rfloor$
\STATE $c_j \leftarrow \II_{\{ j \leq h\}}(j), j = 1,...,N.$
\FOR{$w=1$ to $1000$} \STATE $\widehat{\beta}_w^{ML} \leftarrow
\argu{\argmax}{\beta\in\mathbb R^p}\left(\sum_{1 \leq j \leq N
\cap j \in \mathcal F_w} \left(
Y_j\, x_j'\beta-\exp({x_j'\beta})\right)\right)$ 
\STATE $\hat{e}^w \leftarrow y - \exp(X\hat{\beta}^{ML}_{w})$
\STATE $X^2_w \leftarrow \sum_{1 \leq j \leq N \cap j \in \mathcal
F_w} (y_j - \hat{e}^w_j)^2 / \hat{e}^w_j$
\ENDFOR \STATE $w^* \leftarrow \argu{\argmin}{w} \sum_{j=1}^N c_j
X^2_{w; (j)} (y_j, \hat{e}_j )$ \FOR{$j = 1$ to $N$} \STATE
Determine $out(\alpha, Poi(\hat{m}^{w^*}_j))$ for $m^{w^*}_j$
based on          $\exp(x_j'\widehat{\beta}_{w^*}^{ML})$
\IF{$\# (y_j \in out(\alpha, Poi(\hat{m}^{w^*}_j)))$ } \STATE
$y_j$ is an outlier \ENDIF \ENDFOR
\end{algorithmic}
\end{algorithm}

Note that Algorithm \ref{alg3ltcs} can be slightly adjusted to
perform outlier detection based on minimal patterns and the LMCS
estimator (see \citealp{shane2001}) as well by replacing $c_j$
with $c_h = 1$ and $c_j = 0, j \neq h$. However, simulation
results suggest that there is virtually no difference between
outlier detection techniques based on LMCS or LTCS, hence we
restrict the analyses to the LTCS estimator.


\section{Minimal patterns and cycles in  the independence model}
\label{sec:MinPatIndependenceModel} Running through all possible
subsets of dimension $p$ to determine the strictly minimal
patterns quickly becomes unfeasible for larger dimensional tables.
It is therefore important to  analyze the structure of these
patterns in more detail.

We focus on the loglinear independence model for two-dimensional
$I \times J$ contingency tables, assuming without loss of
generality $I \leq J$. The design matrix $X$  can be expressed as:
\begin{equation}
X = [a_0, r_1 , \ldots, r_{I-1}, c_1, \ldots, c_{J-1} ]' \, ,
\end{equation}
where  $a_0$ is a unit vector, $r_1$ is the indicator vector of
the first row, $c_1$ is the indicator vector of the first column,
and so on. For instance, the design matrix for $3\times3$ tables
is:
\begin{eqnarray} X &=&
 \begin{pmatrix}
      1 &   1  &  1  &  1   & 1 &   1 &   1 &  1  &  1 \\
      1 &   1  &  1  &  0   & 0 &   0 &   0 &  0  &  0 \\
      0 &   0  &  0  &  1   & 1 &   1 &   0 &  0  & 0 \\
      1 &   0  &  0  &  1   & 0 &   0 &   1 &  0  & 0\\
      0 &   1  &  0  &  0   & 1 &   0 &   0 &  1  & 0 \\
\end{pmatrix} \, . \label{Xdes3x3-theo}
\end{eqnarray}
Another classical representation of the same model is given by the
design matrix
\begin{eqnarray} \widetilde X &=&
 \begin{pmatrix}
      1 &   1  &  1  &  1   & 1 &   1 &   1 &  1  &  1 \\
      1 &   1  &  1  &  0   & 0 &   0 &  -1 &  -1  & -1 \\
      0 &   0  &  0  &  1   & 1 &   1 &  -1 &  -1  & -1 \\
      1 &   0  & -1  &  1   & 0 &  -1 &   1 &  0  & -1 \\
      0 &   1  & -1  &  0   & 1 &  -1 &   0 &  1  & -1 \\
\end{pmatrix} =: \widetilde X_{3 \times 3}. \label{Xdes3x3}
\end{eqnarray}
We  use the latter parametrization in our simulation study as it
is the usual parametrization implemented in the software for
loglinear models, while we use the former parametrization in the
proofs, as many formulae become easy to handle.

In this model, the relevant parameter space for the unknown
parameter vector $\beta$ is ${\mathbb R}^{(I+J-1)}$. Table
\ref{tab:numberofminimalpatterns} shows that the number of
possible patterns with $p=I+J-1$ cells  as well as the number of
(strictly) minimal patterns increases quickly for higher
dimensional tables.

\begin{table}[h]
\centering
\begin{tabular}{rrrrrrrr}
\hline
Dimension of the table                                           & $3 \times 3$  & $2 \times 5$  & $3 \times 4$  & $3 \times 5$  & $4 \times 4$  & $3 \times 6$  & $4 \times 5$ \\
\hline
$p=I+J-1    $                               & 5                         & 6                         & 6                         & 7                         & 7                          & 8                        & 8 \\
$\phi = \left\lfloor \frac{N}{2} \right\rfloor   + 1$                               & 5                         & 6                         & 7                          & 8                        & 9                          & 10                        & 11 \\
$\binom{N}{\phi}$& 126                  & 210                   & 792                   & 6435                  & 11440                 & 43758                  & 167960 \\
$W = \#$  min. patterns             & 81                        & 80                        & 612                       & 3780                  & 9552                   & 26325                    & 139660 \\
$\#$ str. min. patterns & 81 & 80 & 432 & 2025 & 4096 & 41066 & 105408 \\
\hline
\end{tabular}
\caption{Number of minimal patterns for different independence
models} \label{tab:numberofminimalpatterns}
\end{table}

\begin{example}
In the case of $3 \times 3$ tables, the two configurations below
have different behavior:
\begin{center}
\begin{tabular}{|c|c|c|}
\hline
$\star$ & $\star$ &   \\
\hline
$\star$ & $\star$ &   \\
\hline
 & & $\star$ \\
\hline
\end{tabular}
\qquad \qquad \qquad
\begin{tabular}{|c|c|c|}
\hline
$\star$ & $\star$ &  \\
\hline
 & $\star$ & $\star$ \\
 \hline
 & & $\star$ \\
\hline
\end{tabular}
\end{center}
(the $\star$'s denote the chosen cells). The configuration on the
left hand side produces a singular submatrix, while the
configuration on the right hand side produces a non-singular
matrix, and hence it is a strictly minimal pattern. At a first
glance, we note that in the singular case there is a complete $2
\times 2$ subtable among the chosen cells, while in the other case
it is not. The relevance of $2 \times 2$ subtables in the study of
the independence model is well known, see e.g. \cite{agresti2002},
and a different perspective within the field of Algebraic
Statistics is investigated in e.g. \cite{rapallo2003}. However,
the simple notion of a $2 \times 2$ submatrix is not sufficient to
effectively describe the problem, as shown in the following
example:
\begin{center}
\begin{tabular}{|c|c|c|c|}
\hline
$\star$ & $\star$ &  &  \\
\hline
$\star$ & & $\star$ & \\
\hline
 & $\star$ & $\star$ &  \\
\hline
 & &  & $\star$ \\
 \hline
\end{tabular}
\end{center}
In this case, the chosen configuration does not contain any $2
\times 2$ submatrices, and nevertheless the corresponding
submatrix is singular.
\end{example}

To explore the structure of patterns in the table we need to
introduce the notion of $k$-cycle.

\begin{definition}\label{Def:cycle}
Let $k \geq 2$. A $k$-cycle is a set of $2k$ cells contained in a
$k \times k$ subtable, with exactly $2$ cells in each row and in
each column of the submatrix.
\end{definition}

\begin{example}
In view of Definition \ref{Def:cycle}, a $2$-cycle is simply a $2
\times 2$ submatrix, while a $3$-cycle is a set of $6$ cells of
the form
\begin{center}
\begin{tabular}{|c|c|c|}
\hline
$\star$ & $\star$ &  \\
\hline
$\star$ &  & $\star$ \\
 \hline
 & $\star$ & $\star$ \\
\hline
\end{tabular}
\end{center}
\end{example}

In case of the independence model, the following theorem shows
that the cycles are the key ingredient to check whether a subset
of $p$ cells is a strictly minimal pattern.

\begin{theorem}
A set of $p=I+J -1$ cells forms a strictly minimal pattern for the
independence model if and only if it does not contain any
$k$-cycles, $k=2, \ldots, I$.
\end{theorem}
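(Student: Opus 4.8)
The plan is to translate the linear-algebraic condition into a purely graph-theoretic one. Associate with the $I \times J$ table the complete bipartite graph $K_{I,J}$ on the vertex set $\{\rho_1,\dots,\rho_I\}\cup\{\gamma_1,\dots,\gamma_J\}$, where $\rho_i$ stands for row $i$ and $\gamma_j$ for column $j$, and identify the cell $(i,j)$ with the edge $\rho_i\gamma_j$. In the parametrization $X=[a_0,r_1,\dots,r_{I-1},c_1,\dots,c_{J-1}]'$ used in the proofs, the column attached to cell $(i,j)$ is $a_0+r_i+c_j$ (with the convention $r_I=0$ and $c_J=0$), which apart from the intercept row is exactly the column of the unsigned vertex--edge incidence matrix $M$ of $K_{I,J}$ for the edge $\rho_i\gamma_j$. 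Thus a set of $p=I+J-1$ cells becomes a set of $I+J-1$ edges, and a $k$-cycle in the sense of Definition \ref{Def:cycle} (two cells in each of $k$ rows and $k$ columns) corresponds precisely to a cycle of length $2k$ in this graph; since $I\le J$, these $k=2,\dots,I$ exhaust all cycles that can occur.

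First I would show that passing from $M$ to $X$ does not change the rank of any cell-submatrix. The $I$ row-vertex rows of $M$ sum to the all-ones vector, and so do the $J$ column-vertex rows, so the rows of $M$ satisfy exactly one linear relation and $\mathrm{rank}(M)=I+J-1$. The rows of $X$ span the same space: $a_0$ is the common row-sum, $r_1,\dots,r_{I-1}$ and $c_1,\dots,c_{J-1}$ are row- and column-vertex rows, and the two omitted vertex rows are recovered as $a_0-\sum_{i<I}r_i$ and $a_0-\sum_{j<J}c_j$. Since $X$ and $M$ have the same row space, the linear dependencies among the chosen columns coincide for every subset $S$ of cells, whence $\mathrm{rank}(X_S)=\mathrm{rank}(M_S)$. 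In particular, for $|S|=p$ the square matrix $X_S$ is nonsingular if and only if the $p$ corresponding edges are linearly independent in $M$.

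It then remains to characterize linear independence of edges in the incidence matrix of a bipartite graph. For one direction, any cycle of even length $2k$ yields an explicit dependency by assigning alternating signs $\pm 1$ to its edges: each vertex lies on two consecutive cycle edges, which receive opposite signs and cancel, and evenness makes the signs close up consistently; hence a set of cells containing a $k$-cycle gives a singular $X_S$. For the converse I would use that an acyclic subgraph of $K_{I,J}$ with $I+J-1$ edges on $I+J$ vertices is a spanning tree and argue independence by peeling leaves: a forest always has a degree-one vertex, the edge at such a leaf contains a vertex incident to no other chosen edge, so its column cannot appear in any nontrivial dependency and can be removed, and the argument repeats by induction. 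Equivalently one may invoke the classical rank formula for the unsigned incidence matrix, whose rank on a connected component is the number of its vertices minus one precisely when that component is bipartite; here every subgraph is bipartite, so the columns of $M_S$ are independent exactly when $S$ is acyclic.

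Combining the two reductions finishes the proof: $X_S$ is nonsingular $\iff$ the $I+J-1$ edges are independent $\iff$ the associated subgraph is acyclic $\iff$ $S$ contains no $k$-cycle for any $k=2,\dots,I$. The step I expect to require the most care is the second reduction: the unsigned incidence matrix behaves differently from the oriented one, and the equivalence ``independent $=$ forest'' relies essentially on bipartiteness forcing every cycle to be even, since an odd cycle would in fact be independent. Making the identity $\mathrm{rank}(X_S)=\mathrm{rank}(M_S)$ watertight and confirming that the combinatorial $k$-cycles of Definition \ref{Def:cycle} are exactly the graph cycles of $K_{I,J}$ are the points that must be argued precisely rather than waved through.
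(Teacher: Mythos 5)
Your proof is correct, but it reaches the conclusion by a genuinely different route from the paper, at least in the harder direction. You reformulate the problem on the complete bipartite graph $K_{I,J}$, show that the design matrix $X$ and the unsigned vertex--edge incidence matrix $M$ have the same row space (hence every cell-submatrix has the same rank in both), and then invoke --- and sketch a proof of --- the standard fact that for a bipartite graph the columns of the unsigned incidence matrix indexed by an edge set are independent exactly when that edge set is a forest; with $I+J-1$ edges on $I+J$ vertices this means a spanning tree, i.e.\ no cycles. The paper's forward direction is essentially the same as yours (a $k$-cycle splits into two transversals, and the $\pm 1$ combination of the corresponding columns vanishes --- your alternating-sign argument in disguise). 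The converse, however, differs: the paper starts from a nontrivial null combination of the columns and chases coefficients --- using that each row and column indicator lies in the row span of $X$ --- to exhibit a cycle inside the support of the dependency, an argument whose iteration step is left somewhat informal. Your leaf-peeling induction on the forest is cleaner and fully rigorous, at the price of the extra reduction $\mathrm{rank}(X_S)=\mathrm{rank}(M_S)$, which you justify correctly via equality of row spaces. One small point to tighten: a $k$-cycle in the sense of the paper's definition is a $2$-regular bipartite subgraph on $k+k$ vertices, which is a disjoint union of graph cycles rather than necessarily a single cycle of length $2k$; this does not affect the theorem, since ``contains some $k$-cycle for some $k$'' and ``contains some graph cycle'' remain equivalent, but the correspondence is not quite the bijection your wording suggests.
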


\begin{proof}
First, note that a cycle can be decomposed into two subsets of $k$
cells each with one cell in each row and in each column. It is
enough to sum the columns of the design matrix $X$ with
coefficient $+1$ for the cells in the first subset and with
coefficient $-1$ for the second subset and we obtain a null
vector. Thus, the submatrix is singular and the set does not form
a strictly minimal pattern.

Conversely, if the submatrix is singular, then there is a null
linear combination among the columns of the submatrix, with
coefficients not all zero. Denote with $c_{(i,j)}$ the column of
the design matrix corresponding to the cell $(i,j)$. Therefore, we
have
\begin{equation} \label{lincomb}
\gamma_1 c_{(i_1,j_1)} + \ldots + \gamma_p c_{(i_p,j_p)} = 0
\end{equation}
and the coefficients $\gamma_1, \ldots, \gamma_p$ are not all
zero. Without loss of generality, suppose that $\gamma_1 >0$. As
the indicator vector of row $i_1$ belongs to the row span of $X$
and the same holds for the indicator vector of column $j_1$, we
must have: a cell in the same row $(i_2,j_2)=(i_1,j_2)$ with
negative coefficient in Eq. \eqref{lincomb}; a cell in the same
column $(i_3,j_3)=(i_3,j_1)$ with negative coefficient in Eq.
\eqref{lincomb}. Therefore, there must be a cell in row $i_3$ and
a cell in column $j_2$ with positive coefficients. Now, two cases
can happen:
\begin{itemize}
\item if the cell $(i_3,j_2)$ is a chosen cell and its coefficient
in Eq. \eqref{lincomb} is positive, we have a $2$-cycle;

\item otherwise, we iterate the same reasoning as above, with
another pair of cells.
\end{itemize}
This shows that there exists a certain number $k$ of rows ($k>2$),
and the same number of columns, with two cells each with a
non-zero coefficient. Such cells form by definition a $k$-cycle.
\end{proof}

As a corollary, the following algorithm produces strictly minimal
patterns:
\begin{enumerate}
\item Let ${\mathcal C}$ be the set of all cells of the table, and
${\mathcal S}=\emptyset$ the set of the chosen cells.

\item For $q \in \{1, \ldots, I+J - 1\}$:
\begin{itemize}
\item Choose a cell uniformly from ${\mathcal C}$, add it to
${\mathcal S}$, and delete it from ${\mathcal C}$;

\item Find all $3$-tuples, $5$-tuples and so on of cells in
${\mathcal S}$ containing the chosen cell and delete from
${\mathcal C}$ all cells (if any) producing $2$-cycles, $3$-cycles
and so on.
\end{itemize}
\end{enumerate}

Notice that the first three cells are chosen without any
restrictions. Moreover, as the algorithm is symmetric on row and
column permutation, one has that the strictly minimal pattern is
selected uniformly in the set of all strictly minimal patterns.

For $3 \times 3$ tables, our statement is equivalent to another
criterion, to be found in \cite{kuhnt2000}.

\begin{corollary}
For the independence model for $3 \times 3$ tables, the absence of
$2$-cycles is equivalent to:
\begin{itemize}
\item[(i)] no empty rows;

\item[(ii)] no empty columns;

\item[(iii)] for each selected cell, there is at least another
cell in the same row or in the same column.
\end{itemize}
\end{corollary}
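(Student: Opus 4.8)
The plan is to prove the corollary by establishing equivalence between the absence of $2$-cycles and conditions (i)--(iii) for the specific case of $3 \times 3$ tables, where a strictly minimal pattern consists of exactly $p = I + J - 1 = 5$ cells. Since $I = 3$, the theorem tells us that a set of $5$ cells is a strictly minimal pattern if and only if it contains no $k$-cycles for $k = 2, 3$. So as a first reduction, I would note that for a set of only $5$ cells, a $3$-cycle (which requires $6$ cells) cannot occur, so the absence of $2$-cycles alone characterizes strictly minimal patterns here. This already links the left-hand side of the corollary to the full strictly-minimal-pattern property.

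\medskip

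The core of the argument is then a purely combinatorial equivalence. I would prove both directions by contraposition. For the forward direction, I would assume one of (i)--(iii) fails and exhibit a $2$-cycle. If some row is empty (condition (i) fails), then the $5$ chosen cells lie in just two rows; by pigeonhole, since there are only $3$ columns, two of these cells must share a column, and combined with the row structure one can locate four cells forming a complete $2 \times 2$ subtable. A symmetric argument handles an empty column (condition (ii)). If condition (iii) fails, there is a selected cell that is alone in both its row and its column; then the remaining $4$ cells occupy the complementary $2 \times 2$ block of rows and columns, and with $4$ cells in a $2 \times 2$ block they necessarily form a $2$-cycle. Each case is a short counting argument.

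\medskip

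For the converse I would assume a $2$-cycle exists among the $5$ cells and deduce that one of (i)--(iii) must fail. A $2$-cycle occupies two rows and two columns fully; the fifth chosen cell lies somewhere in the remaining cells of the $3 \times 3$ grid. I would examine where this fifth cell can go: if it falls in the third row and third column simultaneously, that cell is isolated and (iii) fails; if it lies in one of the two rows or columns already used by the cycle, then the third row or third column contains no chosen cell, so (i) or (ii) fails. A brief check of the handful of positions for the fifth cell completes the argument.

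\medskip

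The main obstacle I anticipate is not any single deep step but rather the careful bookkeeping of cases in the small $3 \times 3$ grid: one must verify that the pigeonhole counts are tight and that no configuration slips through, in particular making sure that conditions (i), (ii), (iii) together are genuinely equivalent to the absence of a $2$-cycle and not merely implied in one direction. Because the grid is small and fixed, an exhaustive or near-exhaustive enumeration is feasible and is probably the cleanest way to guarantee completeness, so I would keep the casework explicit rather than seeking a slicker but error-prone shortcut.
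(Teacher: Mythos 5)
Your proposal is correct and follows essentially the same route as the paper: both directions are argued by contraposition, with the same three cases (empty row, empty column, isolated cell forcing the remaining four cells into a $2\times 2$ block) and the same enumeration of the five possible positions of the fifth cell relative to a given $2$-cycle. Your preliminary remark that a $3$-cycle cannot occur among only five cells is a harmless addition not present in the paper's proof.
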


\begin{proof}
Suppose that there is an empty row. In the remaining two rows we
have to put $5$ cells, and a $2$-cycle must appear. The same
reasoning holds in the case of an empty column. Finally, if there
is a selected cell, say $(i,j)$, with no other cells in the same
row or in the same column, we exclude for the remaining 4 cells of
the minimal pattern the 5 cells of the $i$-th row and of the
$j$-column. Thus the remaining $4$ cells are forced to constitute
a $2$-cycle.

On the other hand, suppose that there is a $2$-cycle, and suppose
without loss of generality that the cycle is formed by the cells
$(1,1),(1,2),(2,1),(2,2)$. The last selected cell can be chosen in
$5$ different ways. In two cases, $(1,3)$ or $(2,3)$, we have an
empty row; in two cases, $(3,1)$ or $(3,2)$, we have an empty
column; in the last case, $(3,3)$, this cell has no other cells in
the same row or in the same column.
\end{proof}

For a general loglinear model, we can define an algorithm to
efficiently sample minimal patterns as follows:

\begin{itemize}
\item[(a)] First, choose a strictly minimal pattern.

\item[(b)] Add randomly chosen cells in order to achieve a minimal
pattern, if needed.
\end{itemize}

This procedure can be repeated until every possible minimal
pattern has been found. Alternatively, if this is unfeasible due
to the dimension of the table, the procedure may be stopped after
a certain time or certain number of patterns. In case of the
two-way independence model this  produces a uniform random minimal
pattern, as long as the strictly minimal pattern is uniformly
chosen with the algorithm above.


\section{Simulation study}
\label{sec:SimulationStudy}

In the previous sections we presented different methods to
identify $\alpha$-outliers. To compare different outlier
identifiers, \cite{kuhnt2010} discusses breakdown points of the
methods. For the OMPC and the OMP methods, it is not clear if
breakdown points or similar criteria can be derived theoretically
at all. Hence we present three loglinear Poisson models with
varying outlier situations, conduct simulations and check whether
the methods (one-step $L_1$ (OL1), OLTCS, OMPC and OMPCL1) detect
outliers and inliers correctly. We exclude OMP from the comparison
as it might lead to results which are not unique and therefore not
directly comparable.

We consider three different loglinear Poisson models (($3 \times
3$), ($4 \times 4$) and ($10 \times 10$)) and insert various
outlying values in the simulated contingency tables. For example,
we vary the $\alpha$-value which determines the outlyingness of
the inserted value. For the simulations we adapt the notion of
``types'' and ``antitypes'' from Configural Frequency Analysis
\citep{voneye:02}. A type is defined as a cell in a contingency
table with a higher value than the expected cell count, hence
above the upper bound of the corresponding $\alpha$-inlier region.
An antitype has a smaller value than the expected cell count,
hence smaller than the lower bound of the corresponding
$\alpha$-inlier region.

The six simulated scenarios are described below. The simulations
were performed with R \citep{r-project:10} and the results are
given in Table \ref{tab:sim}.

\begin{enumerate}
\item We generate 100 $3 \times 3$ contingency tables with
$\widetilde X = \widetilde X_{3 \times 3}$ and $\beta_1 = (4, 0.2,
-0.2, 0.4, 0.3)'$ with only one $\alpha$-outlier
($\alpha=10^{-4}$) in cell (1,1). Since the position of one
outlier in the table is unimportant we place the outlier in the
first row and column of each table. The outlier can be seen as a
moderate outlier. For the cell $(1,1)$, the outlier region with
respect to a Poisson distribution is given by:
\[ [0, out_{\rm{left}})
\cup (out_{\rm{right}}, \infty) = [0, 63) \cup (140, \infty)
\]
such that the value 62 is inserted as antitype and 141 as type.
Since $3 \times 3$ contingency tables have been analyzed in
\cite{kuhnt2000} extensively, we then move to larger tables.

\item  We generate
 100 $4 \times 4$ tables based on $\widetilde{X}_{4\times4}$ created analogous to $\widetilde{X}_{3\times3}$ and\\  $\beta_2 = (3.8,
0.2,-0.2,0.1,0.25,0.3,-0.1)'$. As before, we insert only one
moderate $\alpha$-outlier $(\alpha=10^{-4})$ in cell (1,1), namely
$n_{11}=39$ as antitype and $n_{11}=105$ as type.

\item Again we generate 100 $4 \times 4$ tables based on $\beta_2$
and $\widetilde{X}_{4\times4}$. To see how the methods work with
several outliers, we add another $\alpha$-outlier
$(\alpha=10^{-4})$ resulting in three different situations: Two
types, two antitypes, and one type and one antitype. In this
scenario, we inserted the outliers in cells $(1,1)$ and $(1,2)$.
Notice that the presence of two outliers in the same row can
manipulate the estimates of that row in a notably way.


\item We reconsider the situation from the third scenario with
$\beta_2$ and $\widetilde{X}_{4\times4}$. This time, we replace
two values on the main diagonal of the contingency table with
$\alpha$-out\-liers $(\alpha=10^{-4})$ in cells $(1,1)$ and
$(2,2)$. In this case, the two outliers affect different parameter
estimates.


\item The last simulation with $4 \times 4$ tables based on
$\beta_2$ and $\widetilde{X}_{4\times4}$ is similar to the third
scenario, but here the outlyingness of the replaced values in
cells (1,1) and (1,2) is more extreme. Now, $\alpha=10^{-8}$ is
used.

\item \label{sec:Sim6} We finish the simulation studies with the
generation of $100$ large $10 \times 10$ contingency tables. The
corresponding parameter vector is
\begin{eqnarray*}
\beta_3 & = & (3.3, 0.2,-0.2,0.1,0.25,0.3,-0.1,0.4,0.2,0.1, \\
      &   &  0.2,-0.4, 0.2,-0.2,0.1,0.0,0.1,-0.3,0.1)' \,
\end{eqnarray*}

and the design matrix given by $\widetilde{X}_{10\times10}$. Then
$\alpha$-outliers are inserted in cell $(1,1)$ and cell $(2,3)$,
with $\alpha=10^{-4}$. The number of minimal patterns we consider
here is constrained to $500$.

\end{enumerate}

\begin{table}[hp]
\centering
\begin{tabular}{c|c}
Scenario &  \\ \hline 1 &
\begin{tabular}{rrrrr}
 & \multicolumn{2}{c}{$n_{11} = 62$} & \multicolumn{2}{c}{$n_{11} = 141$}  \\

estimator & outliers & inliers & outliers & inliers \\ 
\hline
OL1 & 0.320     & 0.963     & 0.480 & 0.974 \\ 
OLTCS & 0.480           & 0.946         & 0.530 & 0.950 \\
OMPC & 0.680    & 0.754 & 0.820 & 0.773 \\ 
\end{tabular} \\ \hline
2 &
\begin{tabular}{rrrrr}
& \multicolumn{2}{c}{$n_{11} = 39$} & \multicolumn{2}{c}{$n_{11} = 105$} \\
estimator & outliers & inliers & outliers & inliers \\ 
\hline
OL1 & 0.620 & 0.979  & 0.620 & 0.989 \\ 
OLTCS & 0.740 & 0.943 & 0.680 & 0.937 \\
OMPC & 0.890 & 0.899  & 0.900 & 0.909 \\
OMPCL1 & 0.910 & 0.877 & 0.930 & 0.909\\
\end{tabular} \\ \hline
3 &
\begin{tabular}{rrrrrrr}
& \multicolumn{2}{c}{2 antitypes} & \multicolumn{2}{c}{1 type, 1 antitype} & \multicolumn{2}{c}{2 types} \\
& \multicolumn{2}{c}{$n_{11} = 39, n_{12} = 42$} & \multicolumn{2}{c}{$n_{11} = 39, n_{12} = 110$} & \multicolumn{2}{c}{$n_{11} = 105, n_{12} = 110$} \\

estimator & outliers & inliers & outliers & inliers & outliers & inliers \\
\hline
OL1 & 0.035 & 0.960 & 0.725 & 0.986 & 0.200 & 0.983 \\
OLTCS & 0.295 & 0.896 & 0.680 & 0.937 & 0.295 & 0.908 \\
OMPC & 0.435 & 0.868 & 1.000 & 0.878 & 0.470 & 0.901 \\
OMPCL1 & 0.615 & 0.833 & 1.000 & 0.846 & 0.720 & 0.874 \\
\end{tabular} \\ \hline
4 &
\begin{tabular}{rrrrrrr}
& \multicolumn{2}{c}{2 antitypes} & \multicolumn{2}{c}{1 type, 1 antitype} & \multicolumn{2}{c}{2 types} \\
& \multicolumn{2}{c}{$n_{11} = 39, n_{22} = 23$} & \multicolumn{2}{c}{$n_{11} = 39, n_{22} = 79$} & \multicolumn{2}{c}{$n_{11} = 105, n_{22} = 79$} \\
Estimator & outliers & inliers & outliers & inliers & outliers & inliers \\
\hline
OL1 & 0.740 & 0.976 & 0.495 & 0.980 & 0.635 & 0.984 \\
OLTCS & 0.795 & 0.948 & 0.670 & 0.923 & 0.695 & 0.921 \\
OMPC & 0.975 & 0.804 & 0.840 & 0.834 & 0.965 & 0.857 \\ 
OMPCL1 & 0.975 & 0.692 & 0.885 & 0.776 & 0.975 & 0.818 \\
\end{tabular} \\ \hline
5 &
\begin{tabular}{rrrrrrr}
& \multicolumn{2}{c}{2 antitypes} & \multicolumn{2}{c}{1 type, 1 antitype} & \multicolumn{2}{c}{2 types} \\
& \multicolumn{2}{c}{$n_{11} = 27, n_{12} = 29$} & \multicolumn{2}{c}{$n_{11} = 27, n_{12} = 128$} & \multicolumn{2}{c}{$n_{11} = 124, n_{12} = 128$} \\
estimator & outliers & inliers & outliers & inliers & outliers & inliers \\
\hline
OL1 & 0.140 & 0.896 & 0.980 & 0.987 & 0.450 & 0.969 \\
OLTCS & 0.370 & 0.870 & 0.835 & 0.936 & 0.430 & 0.898 \\
OMPC & 0.880 & 0.771 & 1.000 & 0.643 & 0.855 & 0.829 \\
OMPCL1 & 0.975 & 0.725 & 1.000 & 0.653 & 0.950 & 0.806 \\
\end{tabular} \\ \hline
6 &
\begin{tabular}{rrrrrrr}
& \multicolumn{2}{c}{2 antitypes} & \multicolumn{2}{c}{1 type, 1 antitype} & \multicolumn{2}{c}{2 types} \\
& \multicolumn{2}{c}{$n_{11} = 18, n_{23} = 9$} & \multicolumn{2}{c}{$n_{11} = 18, n_{23} = 49$} & \multicolumn{2}{c}{$n_{11} = 67, n_{23} = 49$} \\
estimator & outliers & inliers & outliers & inliers & outliers & inliers \\
\hline
OL1 & 0.963 & 0.991 & 0.936 & 0.992 & 0.935 & 0.992 \\
OLTCS & 0.850 & 0.929 & 0.855 & 0.933 & 0.815 & 0.932 \\
OMPC & 0.990 & 0.940 & 0.990 & 0.953 & 1.000 & 0.956 \\
OMPCL1 & 0.910 & 0.995 & 0.895 & 0.997 & 0.940 & 0.997\\
\end{tabular} \\ \hline
\end{tabular}
\caption{Proportions of correctly classified outliers and inliers
in the $6$ simulation scenarios.} \label{tab:sim}
\end{table}

\begin{table}[hp]
\centering
\begin{tabular}{c|c|ccccccccc}
 & & $g=$0.1 & 0.2 & 0.3 & 0.4 & 0.5 & 0.6 & 0.7 & 0.8 & 0.9
\\ \hline
$3 \times 3$ & $M_0$ & 0.984 & 0.976 & 0.950 & 0.898 & 0.828 &
0.752 & 0.650 & 0.498 & 0.450 \\
 & $M_1$ & 0.452 & 0.508 & 0.664 & 0.778 & 0.888 & 0.932 & 0.964 & 0.980 &
 0.984 \\ \hline
$4 \times 4$ & $M_0$ & 0.998 & 0.988 & 0.984 & 0.956 & 0.894 &
0.844 & 0.746 & 0.608 & 0.418 \\
 & $M_1$ & 0.562 & 0.730 & 0.814 & 0.886 & 0.926 & 0.956 & 0.980 & 0.990 &
 0.996 \\ \hline
$5 \times 5$ & $M_0$ & 1.000 & 0.998 & 0.990 & 0.964 & 0.926 &
0.868 & 0.768 & 0.644 & 0.462 \\
 & $M_1$ & 0.632 & 0.792 & 0.892 & 0.928 & 0.950 & 0.980 & 0.992 & 0.996 &
 1.000 \\ \hline
$6 \times 6$ & $M_0$ & 1.000 & 1.000 & 0.990 & 0.984 & 0.956 &
0.936 & 0.880 & 0.786 & 0.588 \\
 & $M_1$ & 0.746 & 0.862 & 0.914 & 0.942 & 0.962 & 0.974 & 0.982 & 0.996 & 0.998
 \\ \hline
$7 \times 7$ & $M_0$ & 1.000 & 1.000 & 1.000 & 0.994 & 0.980 &
0.952 & 0.912 & 0.814 &  0.654 \\
 &$M_1$ & 0.764 & 0.868 & 0.908 & 0.954 & 0.964 & 0.972 & 0.980 & 0.994 &
 0.998 \\ \hline
\end{tabular}
\caption{Proportions of correct classification of cell $(1,1)$
under the models $M_0$ (where (1,1) is not an outlier) and $M_1$
(where (1,1) is an outlier)} \label{tab:coh-2}
\end{table}

All outlier identification methods are always calculated with
$0.01$-outlier regions of the model given by the parameter
estimate. We judge the different methods by the proportion of
correctly identified outliers and inliers, given in Table
\ref{tab:sim}. Analyzing the results, some comments are now in
order.

\begin{itemize}
\item Scenarios 1 and 2 show that the behavior of OL1 is not
satisfactory for small tables. The OLTCS procedure has a better
sensitivity to outliers while only few inliers are classified
wrongly. On the other hand, OMPC has a proportion of correctly
classified outliers notably higher than these methods. OMPCL1 is
not listed in Scenario 1 because of the exact-fit property of the
ML- and $L_1$-estimator in $(3 \times 3)$ tables for minimal
patterns, hence both procedure produce the same result.

\item Scenarios 3 and 4 prove that the position of the outlying
cells within the table is a major issue. In fact, placing the two
outliers in the same row, the proportion of correctly classified
outliers reduces considerably. This phenomenon is particularly
evident in case of two types or two antitypes, since in such cases
the outliers give rise to relevant changes in the parameter
estimates. With two antitypes in the same row we find again that
the OL1 method is almost futile. With respect to Scenario 3, it
seems that the OMPC method outperforms the OLTCS method concerning
outliers in equal directions. On the other hand the inlier
detection rate of the OMPC and OMPCL1 method is notably smaller if
the outliers are in different rows.

\item Comparing scenarios 3 and 5, we observe that all procedures
perform better in finding outliers when the outlyingness of the
two cells is higher.

\item Scenario 6 shows that the proposed methods are still valid
for larger tables, even though the differences between the three
methods become less relevant. The OMPC outperforms the OLTCS,
particularly with regard to outlier detection. Furthermore this is
the only scenario where the OMPCL1 has always a smaller outlier
detection rate and a higher inlier detection rate than the OMPC
method.

\item In all simulations the OMPC algorithm is slightly less
efficient in detecting inliers. This means that in some few cases
it finds more outliers than expected. This issue will be discussed
again after the real data examples, in connections with the
behavior of the OMP method.
\end{itemize}

Finally, we need to motivate our choice of $r_j/2$ as the cutoff
value in the OMPC algorithm. We considered a simulation study for
$3 \times 3$, $4 \times 4$, $5 \times 5$, $6 \times 6$, and $7
\times 7$ tables. For each case, we generated $1,000$ random
contingency tables under two different models:
\begin{itemize}
\item $M_0$: the null independence model;

\item $M_1$: the model of independence plus a $10^{-4}$-outlier in
the cell $(1,1)$.
\end{itemize}
For each table, the $\beta$ vector was chosen with random uniform
components on $(-0.5, 0.5)$ except from the first component, fixed
at $3.8$ in order to control the mean sample size.

Then, we computed the proportion of correct classification of the
cell $(1,1)$ under the two models (i.e., the proportion of outlier
detected for $M_1$ and the proportion of outlier not detected for
$M_0$) running the OMPC algorithm with $\alpha = 0.01$. In order
to motivate the choice of the cutoff point, we have computed such
proportions for different cutoffs of the form $gr$ $(0 < g < 1)$.

The results are displayed in Table \ref{tab:coh-2}. The values in
Table \ref{tab:coh-2} show that $g=1/2$ is a reasonable choice, as
it represents the best trade-off between the two types of error.

\section{Case studies}
\label{sec:casestudies}

We next apply our new outlier detection methods to data sets from
the literature. The first data set of artifacts discovered in
Nevada is a typical example for outliers from the independence
model. The second example of the social mobility of fathers and
sons is widely treated in the literature with the general
understanding that it actually requires a quasi-independence
model, which should become apparent within detecting outliers. The
last example on social networks goes beyond two-way tables and
shows the applicability of our methods to general loglinear
models.
\subsection{Artifacts discovered in Nevada}

To see how the various outlier identification algorithms work
compared to
 procedures from the literature, we look at the data in Table \ref{tab:Nevada}
\citep{mosteller1985}. This table shows how far away from
permanent water certain types of archaeological artifacts have
been found.

\begin{table}[h]
\centering
\begin{tabular}{ccccc}
\hline
                                                                   & \multicolumn{4}{c}{\textit{Distance from permanent water}} \\
\textit{Artifact}          &   Conti-   & Within        & 0.25 - 0.5  & 0.5 - 1 \\
\textit{type}                 & guity      & 0.25 mi    & mi             & mi \\
\hline
Drills                & 2                 & 10             & 4             & 2 \\
Pots                  & 3                 & 8              & 4             & 6 \\
Grinding stones & 13             & 5                  & 3                   & 9 \\
Point fragments & 20                  & 36                  & 19              & 20 \\
\hline
\end{tabular}
\caption{Archaeological finds discovered in Nevada, from
\cite{mosteller1985}. } \label{tab:Nevada}
\end{table}


The OL1 method yields no outliers for $\alpha=0.001$. This holds
also for the OMP method. In contrast, the OMPC method finds two
outliers for $\alpha=0.001$, i.e. cells $(3,1)$ and $(3,2)$. The
OLTCS method detects cell $(3,1)$ as outlier, which is also valid
for the OMPCL1 method. Looking at the OMPC method with a smaller
$\alpha = 0.0005$ we find that only cell $(3,1)$ stays an outlying
cell in this method. This dataset has also been studied in
\cite{simonoff1988}, where cell $(3,1)$ has been declared as
``sure outlier'' and cell $(3,2)$ can be seen as a border-line
situation.

\subsection{Study of social mobility in Britain}

As second example, we briefly present the results on an example
dataset from \cite{glass1954}. The status categories of fathers
and their sons are put together in a (7 $\times$ 7)-contingency
table. \cite{goodman1971} merges certain classes which yields the
$3 \times 3$ contingency table in Table \ref{tab:Status}.
\begin{table}[h]
\centering
\begin{tabular}{rrrrr}
\hline
       &        & \multicolumn{3}{c}{Son} \\
       &        & high & middle & low \\
\hline
       & high   & 588  & 395    & 159 \\
Father & middle & 349  & 714    & 447 \\
       & low    & 111  & 320    & 411 \\
\hline
\end{tabular}
\caption{Status categories of fathers and sons from
\cite{glass1954}.} \label{tab:Status}
\end{table}

Here, OMP identifies the observations $n_{11}, n_{22}, n_{33}$ as
outliers. The OMPC as well as the OMPCL1 method identify every
cell as an outlier, which seems surprising on the one hand, but on
the other hand it is coherent since the underlying independence
model is obviously the wrong one. The choice of the model seems to
be more important to the OMPC and OMPCL1 methods than to the
others. The OMP method yields the only intuitively plausible
outlier pattern with the main diagonal. A potential alternative is
given by the OL1 and OLTCS methods ($n_{11}, n_{13},
n_{31}$ and $n_{33}$ are outliers), while the OMPC and OMPCL1 
offer no satisfying results in this case.

\subsection{Social networks}  \label{beisp2}

As a final example we consider a model different from
independence. \cite{mckinley1973} present a study concerning lay
consultation and help-seeking behavior based on eighty-seven
working-class families in Aberdeen. We consider a
three-dimensional table on friendship networks of pregnant woman
from this data set. The first variable concerns the frequency of
interactions with friends, measured as daily ($X_1=1$), once a
week or more ($X_1=2$) and less than once a week ($X_1=3$). The
geographic proximity to the friends is covered by variable $X_2$
with the categories walk ($X_2=1$) and bus ($X_2=2$). The last
variable $X_3$ states whether the woman is pregnant with the first
($X_3=2$) or a further child ($X_3=1$). The data are summarized in
Table \ref{tab:socialnetworks}.

\begin{table}[h]
\begin{center}
\begin{tabular}{cccccc} \hline
& &  \multicolumn{4}{c}{$X_2$: \textit{Distance}} \\
&  & \multicolumn{2}{c}{Walk} & \multicolumn{2}{c}{Bus}\\
  \multicolumn{2}{c}{$X_3$: \textit{Parity}} & Not first & First & Not first & First \\
\hline
\multirow{3}{8mm}{$X_1$: \textit{Freq.}} & Daily  & $30$ & $6$ & $2$ & $13$\\
 & Weekly & $19$ & $12$ & $16$ & $8$\\
 & Less & $5$ & $2$ & $10$ & $4$ \\ \hline
\end{tabular}
\end{center}
\caption{Data set on social networks from \cite{mckinley1973}.}
\label{tab:socialnetworks}
\end{table}
%

The model we consider assumes the conditional dependence between
$X_1$ and $X_3$ given $X_2$ and has design matrix
\begin{equation*} \widetilde X =
 \left( {\begin{array}{cccccccccccc}
 1 & 1 & 1 & 1 & 1 & 1 & 1 & 1 & 1 & 1 & 1 & 1 \\
 1 & 1 & 1 & 1 & 0 & 0 & 0 & 0 &-1 &-1 &-1 &-1 \\
 0 & 0 & 0 & 0 & 1 & 1 & 1 & 1 &-1 &-1 &-1 &-1 \\
 1 & 1 &-1 &-1 & 1 & 1 &-1 &-1 & 1 & 1 &-1 &-1 \\
 1 &-1 & 1 &-1 & 1 &-1 & 1 &-1 & 1 &-1 & 1 &-1 \\
 1 & 1 &-1 &-1 & 0 & 0 & 0 & 0 &-1 &-1 & 1 & 1 \\
 0 & 0 & 0 & 0 & 1 & 1 &-1 &-1 &-1 &-1 & 1 & 1 \\
 1 &-1 &-1 & 1 & 1 &-1 &-1 & 1 & 1 &-1 &-1 & 1 \\
\end{array} } \right) \, .
\end{equation*}
Running the five outlier identification methods, we obtain that
with the OL1 method the two extreme values are classified as
outliers: $n_{111} = 30$ and $n_{121} = 2$. The OLTCS method
yields one outlier in cell $n_{121}$.

Now we compare the previous results with those yielded by minimal
patterns. There are $\binom{12}{8} = 495$ sets with eight elements
each and $144$ of them fulfill Definition
\ref{Def:MinimalPattern}. The minimal patterns yield 40 times
three outliers, 88 times two outliers and 16 times one outlier.
Therefore we look at those cases where the OMP method found only
one outlier, more precisely cell $n_{121}$ and cell $n_{122}$
(eight times each). So, this method yields two different
solutions.

The OMPC method produces similar results. A cell can be detected
as an outlier 48 times at most. The cells $n_{111}, n_{121},
n_{122}$ have been detected 48 times, cells $n_{311}$ and
$n_{312}$ have not been detected as outliers, the rest of the
cells have been found 24 times, hence $50\%$ of the possible
cases. It is conspicuous that a cell is either always an outlier,
in $50\%$ of the cases or not at all. This fact holds also for
other cell counts and the given model. Furthermore, we are not
interested in having 10 outliers and 2 inliers, that's why we
declare only the cells $n_{111}, n_{121}, n_{122}$ as outliers.
The OMPCL1 method yields the outliers detected by OMPC plus two
additional outliers in cells $n_{112}$ and $n_{311}$. The
comparison of the results from the four methods are summarized in
Table \ref{tab:SocialNetwork}.

\begin{table}[hb]
\begin{footnotesize}
\centering

\begin{tabular}{|l|c|c|c|c|c|c|c|c|c|c|c|c|}\hline
& $n_{111}$ & $n_{112}$ & $n_{121}$ & $n_{122}$ & $n_{211}$ &
$n_{212}$ &$n_{221}$ &$n_{222}$ &  $n_{311}$ & $n_{312}$
&$n_{321}$ &$n_{322}$ \\ \hline
\hline OL1 &$\ast$ & &$\ast$ & & & & & & & & & \\ \hline OLTCS & &
& $\ast$ & & & & & & & & & \\ \hline
\multirow{2}{*}{OMP} & & & $\ast$ & & & & & & & & & \\
                     & & & & $\ast$ & & & & & & & & \\ \hline
OMPC & $\ast$ & & $\ast$ & $\ast$ & & & & & & & & \\ \hline OMPCL1
& $\ast$ & $\ast$ & $\ast$ & $\ast$ & & & & & $\ast$ & & & \\
\hline

\end{tabular}
\end{footnotesize}
\caption{Identification results for the Social Network example.}
\label{tab:SocialNetwork}
\end{table}

\cite{upton1980} and \cite{upton1995} also analyze the given
contingency table with regard to outliers. They state that
$n_{122}$ should be regarded as an outlier because many pregnant
women are still working and get there by bus. There they see their
co-workers who are also their friends. This cell has been detected
as one of the two solutions of the OMP method, which supports the
hypothesis that it works good for a reasonable model and rather
small contingency tables. The OMPC method also detected $n_{122}$
as an outlier, but not as the only one.

\section{Conclusions}
\label{sec:conclusion}

From the simulations and the real data examples, we can now
summarize the main features of the outlier detection algorithms
considered here.

The OL1 method provides a computationally efficient way to detect
outliers in contingency tables, but the OMPC method in most cases
outperforms this one-step procedure. Using the OMPCL1 method
instead of the OMPC  results in an increase of the outlier
detection rates in most situations while simultaneously decreasing
the inlier detection rates. The OLTCS method can be seen as a
compromise between OL1 and OMPC for medium-sized tables w.r.t.
detection rates, but for bigger tables it is outperformed by the
other procedures. The examples suggest that also the OMP method
works better than the OL1 method.


On the other hand, the detection of outliers becomes difficult
when there are several outliers in one row or in one column (see
the third scenario), and more generally the detection is not easy
when the proportion of outliers with respect to the number of
cells is high, as shown in the last example. However, in practice
we expect to have few outlying cells compared to the dimension of
the table. Finally, when the outlyingness is higher (see the fifth
scenario), the methods identify more outliers as outliers, but
also more inliers as outliers.

Of course, it is worth noting that the experiments performed here
are not exhaustive. Several further simulations should be
implemented to explore the performances of the minimal patterns
algorithms, and to adjust the simulation parameters. In
particular, the behavior of our algorithms for large sparse
tables, or for tables with zero cell counts, still needs to be
explored.

Future work is needed on theoretical results on strictly minimal
patterns for higher dimensional loglinear models to allow for the
development of efficient algorithms. Additionally, alternative
estimation methods might be introduced as well as changes in the
procedure.

Some forward procedures to detect outliers in classical regression
for example also start from minimal subsets of observations
\citep{atkinson2000}, however, without the problem of having to
determine them first. Then the one which minimise the median of
the ordered squared residuals of the remaining observation is
chosen as initial outlier free subset to proceed from. The same
approach could also be followed up for contingency tables.

\section*{Acknowledgements}
FR is partially supported by the Italian Ministry for University
and Research, programme PRIN2009, grant number 2009H8WPX5.

\bibliographystyle{spbasic}      
\bibliography{STCO1417_Revision}

\begin{thebibliography}{23}
\providecommand{\natexlab}[1]{#1}
\providecommand{\url}[1]{{#1}}
\providecommand{\urlprefix}{URL }
\expandafter\ifx\csname urlstyle\endcsname\relax
  \providecommand{\doi}[1]{DOI~\discretionary{}{}{}#1}\else
  \providecommand{\doi}{DOI~\discretionary{}{}{}\begingroup
  \urlstyle{rm}\Url}\fi
\providecommand{\eprint}[2][]{\url{#2}}

\bibitem[{Agresti(2002)}]{agresti2002}
Agresti A (2002) Categorical Data Analysis, 2nd edn. Wiley

\bibitem[{Barnett and Lewis(1994)}]{barnett1994}
Barnett V, Lewis T (1994) Outliers in Statistical Data, 3rd edn. Wiley

\bibitem[{Davies and Gather(1993)}]{davies1993}
Davies L, Gather U (1993) The identification of multiple outliers. J Amer
  Statist Assoc 88:782--792

\bibitem[{von Eye(2002)}]{voneye:02}
von Eye A (2002) Configural Frequency Analysis: Methods, Models, and
  Applications. Lawrence Erlbaum Associates, Mahwah, NJ

\bibitem[{Fuchs and Kenett(1980)}]{fuchs1980}
Fuchs C, Kenett R (1980) A test for detecting outlying cells in the multinomial
  distribution and two-way contingency tables. J Amer Statist Assoc 75:395--398

\bibitem[{Glass and Berent(1954)}]{glass1954}
Glass DV, Berent J (1954) Social mobility in Britain. International library of
  sociology and social reconstruction, Routledge \& Kegan Paul

\bibitem[{Goodman(1971)}]{goodman1971}
Goodman LA (1971) A simple simultaneous test procedure for quasi-independence
  in contingency tables. J Roy Statist Soc Ser C 20(2):165--177

\bibitem[{Gupta et~al(2007)Gupta, Nguyen, and Pardo}]{gupta2007}
Gupta AK, Nguyen T, Pardo L (2007) Residual analysis and outliers in loglinear
  models based on $\phi$-divergence statistics. J Statist Plann Inference
  137(4):1407--1423

\bibitem[{Hubert(1997)}]{hubert1997}
Hubert M (1997) The breakdown value of the {$L_1$} estimator in contingency
  tables. Statistics and Probability Letters 33:419--425

\bibitem[{Kuhnt(2000)}]{kuhnt2000}
Kuhnt S (2000) Ausrei{\ss}eridentifikation im {L}oglinearen {P}oissonmodell
  f{\"u}r {K}ontingenztafeln unter {E}inbeziehung robuster {S}ch{\"a}tzer. PhD
  thesis, {Universit{\"a}t Dortmund}, Dortmund

\bibitem[{Kuhnt(2004)}]{kuhnt2004}
Kuhnt S (2004) Outlier identification procedures for contingency tables using
  maximum likelihood and {$L_1$} estimates. Scand J Statist 31:431--442

\bibitem[{Kuhnt(2010)}]{kuhnt2010}
Kuhnt S (2010) Breakdown concepts for contingency tables. Metrika 71:281--294

\bibitem[{McKinley(1973)}]{mckinley1973}
McKinley J (1973) Social networks, lay consultation and help-seeking behavior.
  Social Forces 51:275--291

\bibitem[{Mosteller and Parunak(2006)}]{mosteller1985}
Mosteller F, Parunak A (2006) Identifying extreme cells in a sizable
  contingency table: Probabilistic and exploratory approaches. In: Hoaglin DC,
  Mosteller F, Tukey JW (eds) Exploring Data Tables, Trends, and Shapes, John
  Wiley \& Sons, pp 189--224

\bibitem[{{R Development Core Team}(2012)}]{r-project:10}
{R Development Core Team} (2012) R: A Language and Environment for Statistical
  Computing. R Foundation for Statistical Computing, Vienna, Austria,
  \urlprefix\url{http://www.R-project.org/}

\bibitem[{Rapallo(2003)}]{rapallo2003}
Rapallo F (2003) Algebraic {M}arkov {B}ases and {MCMC} for two-way contingency
  tables. Scand J Statist 30(2):385--397

\bibitem[{Rapallo(2012)}]{rapallo2012}
Rapallo F (2012) Outliers and patterns of outliers in contingency tables with
  algebraic statistics. Scand J Statist Early view.
  doi:{10.1111/j.1467-9469.2012.00790.x}

\bibitem[{Riani and Atkinson(2000)}]{atkinson2000}
Riani M, Atkinson AC (2000) Robust diagnostic data analysis: Transformations in
  regression. Technometrics 42(4):384--394

\bibitem[{Shane and Simonoff(2001)}]{shane2001}
Shane KV, Simonoff JS (2001) A robust approach to categorical data analysis. J
  Comput Graph Stat 10(1):135--157

\bibitem[{Simonoff(1988)}]{simonoff1988}
Simonoff JS (1988) Detecting outlying cells in two-way contingency tables via
  backwards stepping. Technometrics 30(3):339--345

\bibitem[{Terbeck and Davies(1998)}]{terbeck98}
Terbeck W, Davies L (1998) Interactions and outliers in the two-way analysis of
  variance. Ann Statist 26:1279--1305

\bibitem[{Upton(1980)}]{upton1980}
Upton GJ (1980) Contingency table analysis: Log-linear models. Quality \&
  Quantity 14(1):155--180

\bibitem[{Upton and Guillen(1995)}]{upton1995}
Upton GJ, Guillen M (1995) Perfect cells, direct models and contingency table
  outliers. Comm Statist Theory Methods 24(7):1843--1862

\end{thebibliography}

\end{document}